
\documentclass[12pt,onecolumn,draft]{IEEEtran}
\usepackage{times,cite,lineno}
\usepackage{amsmath,amssymb,amsfonts,latexsym,epsf,enumerate,cases,algorithm,algorithmic,hhline,multirow,fancyhdr,subfigure}

\newtheorem{theorem}{Theorem}
\newtheorem{lemma}[theorem]{Lemma}

\begin{document}

\title{\huge{Weighted-Sum-Rate-Maximizing Linear Transceiver Filters for the K-User MIMO Interference Channel}}

\author{\large{Joonwoo~Shin and Jaekyun~Moon,~\IEEEmembership{Fellow,~IEEE}}


\thanks{This work was presented in part at IEEE Global Communications Conference 2011 and supported in part by the IT R\&D program of MKE/KEIT
(KI0038765,Development of B4G Mobile Communication Technologies
for Smart Mobile Services). The authors are with the School of
EECS, Korea Advanced Institute of Science and Technology (KAIST),
373-1,
Guseong-dong, Yuseong-gu, Daejeon, 305-701, Republic of Korea (e-mail: joonoos@etri.re.kr, jmoon@kaist.edu).}

}
%
%

\maketitle

\thispagestyle{empty}

\begin{abstract}

This letter is concerned with transmit and receive filter
optimization for the K-user MIMO interference channel.
Specifically, linear transmit and receive filter sets are designed
which maximize the weighted sum rate while allowing each
transmitter to utilize only the local channel state information.
Our approach is based on extending the existing method of
minimizing the weighted mean squared error (MSE) for the MIMO
broadcast channel to the K-user interference channel at hand. For
the case of the individual transmitter power constraint, however,
a straightforward generalization of the existing method does not reveal a viable solution.
It is in fact shown that there exists no closed-form solution for the transmit filter but
simple one-dimensional parameter search yields the desired solution.
Compared to the direct filter optimization using gradient-based
search, our solution requires considerably less computational
complexity and a smaller amount of feedback resources while
achieving essentially the same level of weighted sum rate.
A modified filter design is also presented which provides desired robustness in the presence of channel uncertainty.

\end{abstract}


\IEEEpeerreviewmaketitle

\setcounter{page}{0}
\newpage

\section{Introduction}

\PARstart{T}o achieve high spectral efficiency, much effort has been
focused on improving the achievable rate of multiple-input
multiple-output (MIMO) interference channels
\cite{Gomadam_IA:2008,LeeInKyu_KuserIC,Heath_IA:2011}. A notable scheme in this area, the interference
alignment (IA) technique of \cite{Jafar_IT} confines all undesired
interferences from other communication links into a pre-define
subspace and achieves a maximum-capacity scaling. However, it is also known that IA can only offer
a suboptimal sum rate at finite signal-to-noise ratios (SNRs) \cite{Heath_IA:2011}.

In this letter, we aim at maximizing the sum rate in the K-user
MIMO interference channel. We consider two linear transceiver
design methods. One is for the sum-power-usage-limit constraint
and the other applies to the per-transmit-node power-usage
constraint. The former can be viewed as a network-level constraint
whereas the latter is more of a device-level constraint. In both
designs, to maximize the weighted sum rate (WSR), we pursue
minimization of the weighted mean squared error (WMSE). The idea
of maximizing the WSR via receiver-side WMSE minimization was
originally developed for the multi-user MIMO broadcast channel
\cite{Cioffi:2008}. Our sum-power-constrained method could be seen
as a generalization of the approach of \cite{Cioffi:2008} to cover
the K-user MIMO interference channel and can be obtained as a
direct extension of the method in \cite{Cioffi:2008}. However, our
individual-power-constrained method is not a direct generalization
of the method of \cite{Cioffi:2008} due to multiple power
constraints. In fact, unlike in the case of the broadcast channel,
we show that there is no closed-form solution for the minimum WMSE
transmit filter, although a simple one-dimensional search for the
power-adjusting parameter leads to the desired solution. Using
simulation results and analysis, we verify that both proposed
schemes achieve the maximum WSR with lower computational
complexity than the gradient-based optimization of the transmit
and receive filters \cite{LeeInKyu_KuserIC}. Also, unlike in
\cite{LeeInKyu_KuserIC, Jafar_IT, ILee_SumMSE:2010}, our schemes
require only the local channel state information (CSI) (i.e., each
transmitter needs to know only the CSI of the links originating
from itself whereas the MIMO interference channel precoder designs
in \cite{LeeInKyu_KuserIC,Jafar_IT,ILee_SumMSE:2010} require the
CSI for all links). Additionally, we discuss modified transceiver
design that provides significant robustness in the presence of
inaccurate CSI.

Related ideas for the MIMO interference channel can also be found
in
\cite{Heath_IA:2011,Utschick:2009,Shen:2010,ILee_SumMSE:2010,Slock_PIMRC:2010,Shi_WMMSE_WSR}.
In \cite{Heath_IA:2011,Shen:2010}, the minimum MSE (MMSE)
transceiver is designed without considering different weights for
the MSEs at multiple receivers. In \cite{ILee_SumMSE:2010}
suboptimal MSE weights are used. In contrast, our weighted MMSE
transceiver design relies on a set of MSE weights that provides a
direct link between the weighted MMSE (WMMSE) and WSR criteria.
The WMMSE-based weighted utility maximization is also considered
in \cite{Utschick:2009}, but there only a single data stream is
assumed between a given user pair. A very similar idea on
maximizing WSR via WMSE minimization under the individual power
constraint has been discussed in \cite{Slock_PIMRC:2010}. But,
unlike in our approach, the inter-dependency between the
transmit-power-adjusting Lagrange multiplier and the precoding
matrix has not been considered in \cite{Slock_PIMRC:2010}. In our
individual-power-constrained transceiver design, this
inter-dependency is handled by introducing one-dimensional search
for the Lagrange multiplier. This means that the method of \cite{Slock_PIMRC:2010} requires recursive optimization
based on exchanges of filter-setting information among all transmitters.
Our method does not require recursive filter adjustment and no data exchanges are needed among
transmitters.\footnote{The independently conducted and recently published work of \cite{Shi_WMMSE_WSR}, which
was brought to our attention by an anonymous reviewer, also
pursues maximization of the WSR via weighted MSE
minimization. The transceivers in \cite{Shi_WMMSE_WSR} do become the
same as our proposed individual-power-constrained transceivers
when each base station serves a single user. Relative to the work of \cite{Shi_WMMSE_WSR},
this letter includes the sum-power-constrained method as well as a method to handle
mismatched CSI.}
Finally, we present a modified
transceiver design method for the imperfect-CSI environment and
analyze the computational complexity as well as the
required feedback amount in comparison with the gradient descent
method \cite{LeeInKyu_KuserIC}.

The following notations are used. We employ upper case boldface
letters for matrices and lower case boldface for vectors. For any
general matrix, ${\bf{X}}$, ${\bf{X}}^T$, ${\bf{X}}^*$,
${\bf{X}}^H$, $\text{Tr}({\bf{X}})$, $\text{det}({\bf{X}})$,
$\text{vec}(${\bf{X}}$)$,  $\text{SVD}(${\bf{X}}$)$ denote the
transpose, the conjugate, the Hermitian transpose, the trace, the
determinant, the stack columns, and the singular value
decomposition of ${\bf{X}}$, respectively. The symbol
${||}\cdot{||}_2$ indicates the 2-norm of a vector. The symbol ${\bf{I}}_n$
denotes an identity matrix of size $n$.

\section{System Model}

We consider the MIMO interference channel where precoding can only
be done over one transmission slot. As shown in Fig.
\ref{FIG:system_model}, $K$ source nodes simultaneously transmit
independent data streams to their desired destination nodes and
generate co-channel interference to all other undesired nodes. In
this system each source node $\lbrace\textsf{S}_k\rbrace$ is
equipped with $M$ antennas and each destination node
$\lbrace\textsf{D}_k\rbrace$ has $N$ antennas $(k\in\lbrace 1\sim
K\rbrace)$. The MIMO channels from $\textsf{S}_i$ to
$\textsf{D}_j$ are modelled by ${\bf{H}}_{ji}
\in{\mathcal{C}}^{N\times M}$ $(i,j\in\lbrace 1\sim K\rbrace)$
whose coefficients are independent and identically distributed
(i.i.d) complex Gaussian random variables with
$\mathcal{CN}(0,\sigma_h^2)$. We assume that the channel
information is only \emph{locally} available, i.e., each node
knows only the coefficients for the channel link originating from
itself. Note that the precoder designs of
\cite{LeeInKyu_KuserIC,Jafar_IT,ILee_SumMSE:2010} are based on the
availability of the \emph{global} channel information. Let
${\bf{s}}_k\in{\mathcal{C}}^{d\times 1}$ denote the symbol vector
from $\textsf{S}_k$ with $\mathbb{E}\lbrack
{\bf{s}}_k{\bf{s}}_k^H\rbrack = {\bf{I}}_d$ where $d$ is the
number of data streams for $\textsf{D}_k$, $d\le M,N$ and the
value of $d$ is chosen to meet the feasibility of degree of
freedom \cite{Bresler_Feasibility_IA}.  Also
${\bf{V}}_k\in{\mathcal{C}^{M\times d}}$ denotes the precoding
matrix for $\textsf{S}_k$.  Then, the $N\times 1$ received signal
vector at $\textsf{D}_k$ is represented as
\begin{equation}
{\bf{y}}_{k} = {\bf{H}}_{kk}{\bf{V}}_{k}{\bf{s}}_{k} + \sum_{i\ne
k }^{K}{\bf{H}}_{ki}{\bf{V}}_{i}{\bf{s}}_{i}+{\bf{n}}_{k},
\label{y_k}
\end{equation}
where ${\bf{n}}_{k}$ denotes the i.i.d complex Gaussian noise
vector at $\textsf{D}_k$ with
$\mathcal{CN}({\bf{0}},\sigma_n^2{\bf{I}}_N)$. Then, $\textsf{D}_k$
combines its received signal with
${\bf{U}}_k\in{\mathcal{C}^{d\times N}}$ to decode the desired
signals:
\begin{equation}
\hat{\bf{s}}_{k} = {\bf{U}}_{k}{\bf{y}}_{k} =
{\bf{U}}_{k}{\bf{H}}_{kk}{\bf{V}}_{k}{\bf{s}}_{k} +
{\bf{U}}_{k}\sum_{i\ne k
}^{K}{\bf{H}}_{ki}{\bf{V}}_{i}{\bf{s}}_{i}+{\bf{U}}_{k}{\bf{n}}_{k}.
\label{s_k_hat}
\end{equation}
Our goal is to find $\lbrace {\bf{V}}_k \rbrace$ and $\lbrace
{\bf{U}}_k \rbrace$ that maximize the WSR under the sum-power
constraint and also the individual-power constraint. We assume a
unit noise variance ($\sigma_n^2=1$) without losing generality.
\section{Weighted Sum Rate Maximization}
First consider finding $\lbrace {\bf{V}}_k \rbrace$ that maximizes
\begin{equation}
\sum_{k=1}^{K}\mu_k R_k  \quad \text{subject to }\sum_k
\text{Tr}({\bf{V}}_k{\bf{V}}_k^H)= P_T \quad \text{or} \quad
\text{Tr}({\bf{V}}_k{\bf{V}}_k^H)= P_k \quad \forall k \label{WSR}
\end{equation}
where the subscript $k$ points the source node and its intended
destination node, $\mu_k$ denotes the weight, $R_k$ is the
achievable rate, $P_T$ represents the maximum sum power allowed
for all transmitters and $P_k$ is the $k$-th node's maximum
transmit power. With Gaussian signaling, the achievable rate takes the
well-known form:
\begin{equation}
 \quad R_k= \text{log } \Big{\lbrace}\text{det}\Big{(} {\bf{I}}_N +
{\bf{\Phi}}_{k}^{-1}{\bf{H}}_{kk}{\bf{V}}_{k}{\bf{V}}_{k}^H{\bf{H}}_{kk}^H
\Big{)}\Big{\rbrace},  \label{R_k}
\end{equation}
where ${\bf{\Phi}}_{k} = {\bf{I}}_N + \sum_{i\ne k}^K
{\bf{H}}_{ki}{\bf{V}}_{i}{\bf{V}}_{i}^H{\bf{H}}_{ki}^H$. We attempt to solve this WSR maximization problem
by minimizing the weighted receiver MSE, as
has been done for the MIMO broadcast channel \cite{Cioffi:2008}.
This approach was also attempted for the K-user MIMO interference channel in \cite{Slock_PIMRC:2010}
under the individual-power constraint,
but our solution is different as elaborated below.

\subsection{Relationship between achievable rate and error covariance matrix}
To understand the link between the WSR maximization problem and
the WMSE minimization problem in the K-user MIMO interference
channel, we need to clarify the relationship between the
achievable rate and the error covariance matrix. This argument is
parallel to one given in \cite{Cioffi:2008} for the MIMO broadcast
channel. For the MMSE receive filter at $\textsf{D}_k$, we write
\begin{align}
{\bf{U}}_{k}^{(MMSE)} = &\arg\min
\mathbb{E}||{\bf{U}}_k{\bf{y}}_k-{\bf{s}}_k ||_2^2 \nonumber\\
=&{\bf{V}}_{k}^{H}{\bf{H}}_{kk}^{H}(\sum_{i=1}^{K}{\bf{H}}_{ki}{\bf{V}}_{i}{\bf{V}}_{i}^{H}{\bf{H}}_{ki}^{H}+{\bf{I}}_N)^{-1},
\label{U_k}
\end{align}
and the error matrix for $\textsf{D}_k$ is given by
\begin{align}
{\bf{E}}_{k}= &\mathbb{E}\lbrace({\bf{U}}_k^{(MMSE)}{\bf{y}}_k-{\bf{s}}_k )({\bf{U}}_k^{(MMSE)}{\bf{y}}_k-{\bf{s}}_k )^{H}\rbrace \nonumber\\
=&({\bf{I}}_N +
{\bf{\Phi}}_{k}^{-1}{\bf{H}}_{kk}{\bf{V}}_{k}{\bf{V}}_{k}^H{\bf{H}}_{kk}^H)^{-1}.
\label{Emtx}
\end{align}
Comparing (\ref{R_k}) and (\ref{Emtx}), the relationship between the
achievable rate and the error covariance matrix is established as:
\begin{equation}
R_k = \text{log }\lbrace\text{det}({\bf{E}}_k^{-1})\rbrace
\label{R_k_E_k}
\end{equation}
which, not surprisingly, is identical to the relationship
between the rate and the error covariance matrix for the case
of the MIMO broadcast channel \cite{Cioffi:2008}. Apparently, though,
the error covariance matrix ${\bf{E}}_{k}$ here is different from that of the broadcast channel
due to the presence of multiple sources.
Note that
this relationship between the achievable rate and the error
covariance matrix holds for any $\lbrace {\bf{V}}_k \rbrace$,
implying that (\ref{R_k_E_k}) is true with either transmit
power constraint.

\subsection{MSE weight design}

Now consider finding $\lbrace
{\bf{V}}_k \rbrace$ that solves the
following WMMSE problem:
\begin{equation}
\min \sum_{k=1}^{K} \text{Tr} ({\bf{W}}_k {\bf{E}}_k) \quad
\text{subject to }\sum_k \text{Tr}({\bf{V}}_k{\bf{V}}_k^H)= P_T
\quad \text{or} \quad \text{Tr}({\bf{V}}_k{\bf{V}}_k^H)= P_k \quad
\forall k ,\label{WMMSE}
\end{equation}
where ${\bf{W}}_k\in\mathcal{C}^{d\times d}$ represents the MSE
weight. Again following the argument of \cite{Cioffi:2008}, the
MSE weights can be chosen so that both WSR and WMMSE problems have
a common solution. For this, set up the Lagrangians for
(\ref{WSR}) and (\ref{WMMSE}):
\begin{equation}
\mathcal{L}_{WSR} = -\sum_{k=1}^{K}\mu_k R_k
+\theta\lambda(\sum_{k=1}^{K}\text{Tr}({\bf{V}}_k{\bf{V}}_k^H)-P_T)+(1-\theta)\Big{(}\sum_{k=1}^{K}{\lambda}_k(\text{Tr}({\bf{V}}_k{\bf{V}}_k^H)-P_k)\Big{)}
\nonumber
\end{equation}
and
\begin{equation}
\mathcal{L}_{WMSE}=
\sum_{k=1}^{K}\text{Tr}({\bf{W}}_k{\bf{E}}_k)+\theta\lambda(\sum_{k=1}^{K}\text{Tr}({\bf{V}}_k{\bf{V}}_k^H)-P_T)+(1-\theta)\Big{(}\sum_{k=1}^{K}{\lambda}_k(\text{Tr}({\bf{V}}_k{\bf{V}}_k^H)-P_k)
\Big{)} \nonumber
\end{equation}
respectively, where $\theta$ selects the desired power constraint
('$\theta =1$' for the sum power constraint and '$\theta=0$' for
the individual power constraint), $\lambda$ and $\lbrace \lambda_k
\rbrace$ denote the Lagrange multipliers for the two transmit
power constraints. Next, equate their gradients obtained via the
matrix derivative formulas:
$d\lbrace\text{ln}(\text{det}({\bf{X}}))\rbrace=\text{Tr}\lbrace{\bf{X}}^{-1}d({\bf{X}})\rbrace$,
$d\lbrace\text{Tr}({\bf{X}})\rbrace=\text{Tr}\lbrace
d({\bf{X}})\rbrace$, $\text{vec}\lbrace
d({\bf{X}})\rbrace=d\lbrace\text{vec}({\bf{X}})\rbrace$,
$\text{Tr}({\bf{X}}^T{\bf{Y}})=\text{vec}({\bf{X}})^T\text{vec}({\bf{Y}})$.
Subsequently, the resulting MSE weight can be found as
\begin{equation}
{\bf{W}}_{k}=\frac{\mu_k}{\text{ln(2)}}{\bf{E}}_{k}^{-1}.\label{W_opt}
\end{equation}
Note that the choice of the MSE weights $\lbrace {\bf{W}}_k \rbrace$ is
irrelevant to the transmit power constraint, which makes sense as
$\lbrace {\bf{W}}_k \rbrace$ are receiver-side design
parameters.

\subsection{Sum power constrained precoder design}
We are now ready to find the transmit precoding matrix that
minimizes the WMSE under the sum-power constraint, i.e., find $\lbrace {\bf{V}}_k \rbrace$ that minimizes
\begin{equation}
\sum_{k=1}^{K} \mathbb{E} \lbrack \text{Tr}\lbrace {\bf{W}}_{k}
({\bf{s}}_k -\beta^{-1}\hat{\bf{s}}_k )({\bf{s}}_k
-\beta^{-1}\hat{\bf{s}}_k )^H\rbrace \rbrack  \quad \text{subject
to } \sum_{k}\text{Tr}({\bf{V}}_k{\bf{V}}_k^H)=
P_T\label{WMMSE_Vk}
\end{equation}
where $\lbrace {\bf{W}}_k \rbrace$ is set according to (\ref{W_opt}) and
$\beta$ is a scaling parameter. With matrix derivative
formulas, the WMMSE transmit filter that satisfies (\ref{WMMSE_Vk}) can be shown to be
\begin{align}
{\bf{V}}_{k} = \beta {\bf{V}}_{k}^{'},\label{V_k}
\end{align}
 where ${\bf{V}}_{k}^{'}=\Big{(}{\bf{\Psi}}_{k}+
\frac{\sum_{i}\text{Tr}( {\bf{W}}_{i}
{\bf{U}}_{i}{\bf{U}}_{i}^{H})}{P_{T}}{\bf{I}}_{M}\Big{)}^{-1}
{\bf{H}}_{kk}^{H}{\bf{U}}_{k}^{H}{\bf{W}}_{k}$,
 ${\bf{\Psi}}_{k}=\sum_{i=
1}^{K}{\bf{H}}_{ik}^{H}{\bf{U}}_{i}^{H}{\bf{W}}_{i}{\bf{U}}_{i}{\bf{H}}_{ik}
$, and $\beta=\sqrt{\frac{P_T}{ \sum_k\text{Tr}
({\bf{{V}}}_{k}^{'} {{\bf{{V}}}_{k}^{'}}^H )}}$.

This result is a rather straightforward generalization of the WMMSE
precoder in the broadcast channel. It can indeed be seen that setting
${\bf{H}}_{ki}={\bf{H}}_{kk}$ for all $i$, our solutions
(\ref{U_k}), (\ref{W_opt}), and (\ref{V_k}) reduce to the
respective receive filter, MSE weight and transmit filter
solutions obtained for the multi-user MIMO broadcast channels
through WMSE minimization \cite{Cioffi:2008}.

\subsection{Individual power constrained transceiver design}

Now let us consider the individual-power-constrained network. We proceed to find the transmit filter
that minimizes the weighted MSE:
\begin{align}
\sum_{k=1}^{K} \mathbb{E} \lbrack \text{Tr}\lbrace {\bf{W}}_{k}
({\bf{s}}_k -\hat{\bf{s}}_k )({\bf{s}}_k -\hat{\bf{s}}_k
)^H\rbrace \rbrack \quad\text{subject to }
\text{Tr}({\bf{V}}_k{\bf{V}}_k^H)= P_k \quad \forall k.
\label{WMMSE_Vk_indP}
\end{align}
Again equating the gradients of the Lagrangians corresponding to
the WMMSE and WSR maximization procedures and using the matrix
derivative formulas, the WMMSE transmit filter at $\textsf{S}_k$
is found as:
\begin{align}
{\bf{{V}}}_{k} =& \Big{(}{\bf{\Psi}}_{k} + \lambda_k
{\bf{I}}_{M}\Big{)}^{-1}
{\bf{H}}_{kk}^{H}{\bf{U}}_{k}^{H}{\bf{W}}_{k}\label{V_k_indP}
\end{align}
where $\lambda_k $ is set to satisfy the transmit power
constraint at $\textsf{S}_k$ and again $\lbrace {\bf{W}}_k \rbrace$ are as given in (\ref{W_opt}).
Unlike the sum-power-constrained
WMMSE precoders of (\ref{V_k}), for which the power control parameters
are found in closed form, here we resort to a numerical method to find $\lambda_k$,
due to the inter-dependency between ${\bf{{V}}}_{k}$ and
$\lambda_k$ in (\ref{V_k_indP}). Fortunately, based on the
following lemma, $\lambda_k$ can be found with simple
one-dimensional (1-D) numerical search.
\begin{lemma}\label{Lem:lambda}
The per-node transmit power,
$\text{Tr}({\bf{V}}_k(\lambda_k){\bf{V}}_k(\lambda_k)^H)$, is a
monotonically decreasing function of $\lambda_k$.
\end{lemma}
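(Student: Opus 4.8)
The plan is to regard every quantity in (\ref{V_k_indP}) other than $\lambda_k$ as fixed and to rewrite the per-node power as an explicit scalar function of $\lambda_k$ in a quadratic-form shape, after which monotonicity is almost immediate. Set ${\bf{A}}_k = {\bf{H}}_{kk}^{H}{\bf{U}}_{k}^{H}{\bf{W}}_{k}$, which does not depend on $\lambda_k$, so that ${\bf{V}}_k(\lambda_k)=({\bf{\Psi}}_k+\lambda_k{\bf{I}}_M)^{-1}{\bf{A}}_k$. Since ${\bf{\Psi}}_k=\sum_{i}{\bf{H}}_{ik}^{H}{\bf{U}}_{i}^{H}{\bf{W}}_{i}{\bf{U}}_{i}{\bf{H}}_{ik}$ is Hermitian and, with the $\{{\bf{W}}_i\}$ positive definite by (\ref{W_opt}), positive semidefinite, the matrix ${\bf{\Psi}}_k+\lambda_k{\bf{I}}_M$ is Hermitian positive definite for every $\lambda_k>0$, and hence so is its inverse. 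Using that this inverse equals its own Hermitian transpose, one then writes
\[
p_k(\lambda_k) = \text{Tr}\big({\bf{V}}_k(\lambda_k){\bf{V}}_k(\lambda_k)^H\big) = \text{Tr}\big(({\bf{\Psi}}_k+\lambda_k{\bf{I}}_M)^{-2}{\bf{A}}_k{\bf{A}}_k^H\big).
\]

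Next I would establish the monotonicity in either of two equivalent ways. The derivative route: apply the identity $d({\bf{X}}^{-1})=-{\bf{X}}^{-1}(d{\bf{X}}){\bf{X}}^{-1}$ with ${\bf{X}}={\bf{\Psi}}_k+\lambda_k{\bf{I}}_M$ and $d{\bf{X}}/d\lambda_k={\bf{I}}_M$, together with the cyclic property of the trace, to get
\[
\frac{dp_k}{d\lambda_k} = -2\,\text{Tr}\big(({\bf{\Psi}}_k+\lambda_k{\bf{I}}_M)^{-3}{\bf{A}}_k{\bf{A}}_k^H\big) = -2\,\text{Tr}\big({\bf{A}}_k^H({\bf{\Psi}}_k+\lambda_k{\bf{I}}_M)^{-3}{\bf{A}}_k\big)\le 0,
\]
which is nonpositive because $({\bf{\Psi}}_k+\lambda_k{\bf{I}}_M)^{-3}$ is positive definite. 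The eigenvalue route, if one prefers to avoid differentiation: take the spectral decomposition ${\bf{\Psi}}_k={\bf{Q}}_k{\bf{\Lambda}}_k{\bf{Q}}_k^H$ with ${\bf{Q}}_k$ unitary and ${\bf{\Lambda}}_k=\text{diag}(\psi_{k,1},\dots,\psi_{k,M})$, $\psi_{k,m}\ge 0$; substituting into the displayed expression for $p_k$ and using trace cyclicity gives $p_k(\lambda_k)=\sum_{m=1}^{M}\big[{\bf{Q}}_k^H{\bf{A}}_k{\bf{A}}_k^H{\bf{Q}}_k\big]_{mm}\big/(\psi_{k,m}+\lambda_k)^2$, a sum in which each numerator is a diagonal entry of a positive semidefinite matrix, hence nonnegative, while each $(\psi_{k,m}+\lambda_k)^{-2}$ is strictly decreasing on $\lambda_k>0$. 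Either way $p_k$ is nonincreasing in $\lambda_k$, and strictly decreasing whenever ${\bf{A}}_k\ne{\bf{0}}$, i.e.\ unless ${\bf{U}}_k{\bf{H}}_{kk}={\bf{0}}$, a degenerate situation of no practical interest.

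There is no real obstacle in this lemma; the content is a one-line observation once $p_k$ is cast in the quadratic-form shape above. The only points that need care are (i) confirming that ${\bf{\Psi}}_k+\lambda_k{\bf{I}}_M$ stays positive definite over the whole range of $\lambda_k$ visited by the search, so that the inverse — and, in the derivative route, its third power — are well defined, and (ii) tracking the Hermitian conjugation when forming ${\bf{V}}_k^H$, where the factor $({\bf{\Psi}}_k+\lambda_k{\bf{I}}_M)^{-1}$ passes through unchanged precisely because ${\bf{\Psi}}_k$ is Hermitian. I would close by noting the boundary behavior that makes the one-dimensional search well posed: $p_k(\lambda_k)\to 0$ as $\lambda_k\to\infty$, while $p_k(\lambda_k)$ increases toward $\text{Tr}({\bf{\Psi}}_k^{-2}{\bf{A}}_k{\bf{A}}_k^H)$ — or diverges when ${\bf{\Psi}}_k$ is singular — as $\lambda_k\downarrow 0$; combined with the strict monotonicity just proved, this guarantees that the equation $p_k(\lambda_k)=P_k$ has a unique solution (whenever $P_k$ does not exceed that limiting value) and that a simple bisection on $\lambda_k$ converges to it.
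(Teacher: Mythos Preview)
Your proof is correct, and your eigenvalue route is exactly the paper's argument: diagonalize ${\bf{\Psi}}_k$ unitarily, reduce $p_k(\lambda_k)$ to a sum of nonnegative diagonal entries divided by $(\psi_{k,m}+\lambda_k)^2$, and read off monotonicity. You also offer a clean derivative-based alternative and the boundary analysis that underpins the bisection; these go beyond what the paper records but are welcome additions. One incidental observation: your middle factor ${\bf{A}}_k{\bf{A}}_k^H={\bf{H}}_{kk}^H{\bf{U}}_k^H{\bf{W}}_k{\bf{W}}_k^H{\bf{U}}_k{\bf{H}}_{kk}$ is the correct expression, whereas the paper's display carries only a single ${\bf{W}}_k$ in place of ${\bf{W}}_k{\bf{W}}_k^H$; this slip does not affect the monotonicity conclusion, since either matrix is positive semidefinite and only the sign of the diagonal entries matters.
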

\begin{proof}
Let
$\text{SVD}({\bf{\Psi}}_{k})={\bf{Q}}_{k}{\bf{\Sigma}}_k{\bf{Q}}_{k}^{H}$.
Then, the transmit power at $\textsf{S}_k$ is given by
\begin{align}
\text{Tr}\lbrace({\bf{V}}_k(\lambda_k){\bf{V}}_k(\lambda_k)^H\rbrace=&\text{Tr}\lbrace({\bf{Q}}_{k}{\bf{\Sigma}}_k{\bf{Q}}_{k}^{H}+\lambda_k{\bf{I}_M})^{-1}{\bf{H}}_{kk}^H{\bf{U}}_{k}^H{\bf{W}}_{k}{\bf{U}}_{k}{\bf{H}}_{kk}({\bf{Q}}_{k}{\bf{\Sigma}}_k{\bf{Q}}_{k}^{H}+\lambda_k{\bf{I}_M})^{-1}\rbrace\nonumber
\\
=&\text{Tr}\lbrace({\bf{\Sigma}}_k+\lambda_k{\bf{I}_M})^{-2}{\bf{Q}}_{k}^H{\bf{H}}_{kk}^H{\bf{U}}_{k}^H{\bf{W}}_{k}{\bf{U}}_{k}{\bf{H}}_{kk}{\bf{Q}}_{k}\rbrace\nonumber
\\
 =& \sum_{i=1}^{K}\frac{\lbrack
{\bf{\Pi}}_k\rbrack_{i,i}}{(\sigma_{k,i}+\lambda_k)^2},
\label{App:Lambda}
\end{align}
where
${\bf{\Pi}}_k={\bf{Q}}_{k}^H{\bf{H}}_{kk}^H{\bf{U}}_{k}^H{\bf{W}}_{k}{\bf{U}}_{k}{\bf{H}}_{kk}{\bf{Q}}_{k}$
, $\lbrack {\bf{\Pi}}_k\rbrack_{i,i}$ is the $ii$-th element of
${\bf{\Pi}}_k$, and $\sigma_{k,i}$ is the $i$-th element of
${\bf{\Sigma}}_k$. Because $\lambda_k\ge 0$,
$\text{Tr}\lbrace{\bf{V}}_k(\lambda_k){\bf{V}}_k(\lambda_k)^H\rbrace$
is monotonically decreasing with $\lambda_k$.
\end{proof}

Note that the proper set of MSE weights for the K-user MIMO
interference channel has already been derived in
\cite{Slock_PIMRC:2010} in the process of establishing a
connection between the WMMSE problem and the WSR maximization
problem. In \cite{Slock_PIMRC:2010}, though, the transmitter
$\textsf{S}_k$ is expressed as a function of
itself as well as transmitters at the other nodes, i.e.
${\bf{V}}_{k} =f(\lbrace {\bf{V}}_{1},\cdots, {\bf{V}}_{K}
\rbrace)$. The consequence of this formulation is that the transmitter solution in \cite{Slock_PIMRC:2010}
cannot be found without recursive calculation and additional
filter-setting information exchanges among all transmit nodes.
In contrast, our transmit
filter design is based on a clear recognition of the inter-dependency between
$\lambda_k$ and ${\bf{V}}_k$, and as a result the proposed transmit
filter (\ref{V_k_indP}) can be found through a simple 1-D numerical search
with no additional information ($\lbrace {\bf{V}}_l \rbrace$
($l\ne k$)) exchanges needed among the transmit nodes.

\subsection{Iterative algorithm to maximize the weighted sum rate}

In the previous sections, we found the MSE weights and then subsequently
WMMSE receive and transmit filters with both the sum-power
constraint and the individual-power constraint. Each of three
sets of parameters - MSE weights, transmit filters and receive filters -
is derived assuming the other sets are given.
In practice, to find optimum WSR solutions, the
inter-dependencies between the parameters are handled with
the following iterative or alternating optimization algorithm.
\begin{algorithm}
\caption{Obtaining the optimal WSR transceivers via the WMMSE
criterion} \label{Alg:proposed}
\begin{algorithmic}
\STATE{Initialize $l=0$ and $\lbrace{\bf{V}}_k^{(0)}\rbrace$,
calculate  ${R}_{sum}^{(0)}$}.

\REPEAT

\STATE{$l:=l+1$}

\STATE{Step 1: Calculate ${\bf{U}}_k^{(l)}|\lbrace
{\bf{V}}_i^{(l-1)}\rbrace$ for all $k$ using (\ref{U_k}).}

\STATE{Step 2: Calculate ${\bf{W}}_k^{(l)}|\lbrace
{\bf{V}}_i^{(l-1)}\rbrace$ for all $k$ using (\ref{W_opt}).}

\STATE{Step 3: Calculate ${\bf{V}}_k^{(l)}|\lbrace
{\bf{U}}_i^{(l)}\rbrace$, $\lbrace {\bf{W}}_i^{(l)}\rbrace$ for
all $k$ using (\ref{V_k}) for the sum power constrained case or
(\ref{V_k_indP}) for the individual power constrained case.}

\UNTIL{$|{R}_{sum}^{(l)}-{R}_{sum}^{(l-1)}|<\epsilon$, where
$\epsilon$ is some arbitrarily small value and $R_{sum}=\sum_k \mu_k
R_k$.}
\end{algorithmic}
\end{algorithm}

The algorithm is common to both the sum-power-constrained design
and the individual-power-constrained design. This algorithm is
provably convergent to a local optimum; this can be shown by
proving monotonic convergence of an equivalent optimization
problem based on expanding the WSR maximization problem of
(\ref{WSR}) to add the MMSE weights and receive filters as
optimization variables, as has been done for the MIMO broadcast
channel in \cite{Cioffi:2008}. We note, however, that this
algorithm does not guarantee the global optimal solution, since
the WMMSE minimization (\ref{WMMSE}) is not \emph{jointly} convex
over all input variables. To reasonably approach the optimal
solution one must resort to repeated runs of the algorithm using
different initial settings, or, for computationally efficient initialization,
choose $\lbrace {\bf{V}}_{k}^{(0)}\rbrace$ in
Step 1 from the right singular matrices of $\lbrace
{\bf{H}}_{kk}\rbrace$ or from random matrices generated
according to the normal distribution with zero mean and unit variance
\cite{Shen:2010}.

\section{Robust transceiver design for imperfect channel information}
 In practical scenarios, mismatch between the true channel
$\lbrace{\bf{H}}_{ij}\rbrace$ and the estimated channel (denoted by
$\lbrace\tilde{\bf{H}}_{ij}\rbrace$) is inevitable because of the
channel estimation errors \cite{Tresch_CHE_IC:2010}. In this section, we
design robust transceivers for mitigating the performance
degradation caused by channel mismatch.  We assume that $\lbrace
\tilde{\bf{H}}_{ij}\rbrace$ is related to
$\lbrace{\bf{H}}_{ij}\rbrace$ by
$\tilde{\bf{H}}_{ij}={\bf{H}}_{ij}+{\bf{\Delta}}_{ij}$ where the
elements of ${\bf{\Delta}}_{ij}$ are independent and identically distributed (i.i.d.) complex Gaussian
random variables with variance $\sigma_{\Delta}^2$
\cite{Tresch_CHE_IC:2010}. Then, the received signal can be
rewritten as
\begin{equation}
{\tilde{{\bf{s}}}}_{k} =
{\tilde{\bf{U}}}_{k}({\tilde{\bf{H}}}_{kk}-{\bf{\Delta}}_{kk}){\tilde{\bf{V}}}_{k}{\bf{s}}_{k}
+ {\tilde{\bf{U}}}_{k}\sum_{i\ne k
}^{K}({\tilde{\bf{H}}}_{ki}-{\bf{\Delta}}_{ki}){\tilde{\bf{V}}}_{i}{{\bf{s}}}_{i}+{\tilde{\bf{U}}}_{k}{\bf{n}}_{k}
\label{un_s_k_hat}
\end{equation}
where $\lbrace{\tilde{\bf{V}}}_{k}\rbrace$ and
$\lbrace{\tilde{\bf{U}}}_{k}\rbrace$ are computed from $\lbrace
{\tilde{\bf{H}}}_{ij}\rbrace$ with no knowledge of the presence of
$\lbrace{\bf{\Delta}}_{ij}\rbrace$. We try to mitigate the
effect of channel mismatch by minimizing the appropriate metrics
averaged over ${\bf{\Delta}}_{ij}$'s.

\subsubsection{Modified MSE weight} Following the design procedure in previous
sections, a modified version of the MMSE receiver filter is found as
${\tilde{\bf{U}}}_{k}
={\tilde{\bf{V}}}_{k}^{H}{\tilde{\bf{H}}}_{kk}^{H}(\sum_{i=1}^{K}{\tilde{\bf{H}}}_{ki}{\tilde{\bf{V}}}_{i}{\tilde{\bf{V}}}_{i}^{H}{\tilde{\bf{H}}}_{ki}^{H}+\sum_{i=1}^{K}
\sigma_{\Delta}^2\text{Tr}({\bf{\Lambda}}_{{\tilde{\bf{V}}}_i}){\bf{I}}_N+{\bf{I}}_N)^{-1}$,
 where $\text{SVD}({\tilde{\bf{V}}}_i{\tilde{\bf{V}}}_i^H)={\bf{Q}}_i{\bf{\Lambda}}_{{\tilde{\bf{V}}}_i}{\bf{Q}}_i^H$.
The modified MSE weights that force the optimum solutions of
the WSR maximization and WMMSE problems to be identical are derived as
${\tilde{\bf{W}}}_k =
\frac{\mu_k}{\text{ln}(2)}{\tilde{\bf{E}}}_k^{-1}$, where
${\tilde{\bf{E}}}_k=({\bf{I}}_N+
{\tilde{\bf{\Phi}}}_{k}^{-1}{\tilde{\bf{H}}}_{kk}{\tilde{\bf{V}}}_{k}{\tilde{\bf{V}}}_{k}^H{\tilde{\bf{H}}}_{kk}^H)^{-1}$
and ${\tilde{\bf{\Phi}}_{k}}={\bf{I}}_N + \sum_{i\ne k}^K
{\tilde{\bf{H}}}_{ki}{\tilde{\bf{V}}}_{i}{\tilde{\bf{V}}}_{i}^H{\tilde{\bf{H}}}_{ki}^H
+ \sum_{i=1}^{K}
\sigma_{\Delta}^2\text{Tr}({\bf{\Lambda}}_{{\tilde{\bf{V}}}_i}){\bf{I}}_N$.

\subsubsection{Robust transceiver design with the sum power constraint}
The modified transmit filters are derived based on the following
optimization problem:
\begin{align}
\min \sum_{k=1}^{K} \mathbb{E} \lbrack \text{Tr}\lbrace
{\tilde{\bf{W}}}_{k} ({\bf{s}}_k
-{\tilde\beta}^{-1}{\tilde{{\bf{s}}}}_k )({\bf{s}}_k
-{\tilde\beta}^{-1}{\tilde{{\bf{s}}}}_k )^H\rbrace \rbrack \quad
\text{subject to }
\sum_{k}\text{Tr}({\tilde{\bf{V}}}_k{\tilde{\bf{V}}}_k^H)= P_T
\label{Mod_WMMSE_Vk}.
\end{align} Utilizing matrix derivative formulas, the resultant modified-WMMSE transmit filters
are obtained as
\begin{equation}
{\tilde{\bf{V}}}_{k} ={\tilde\beta}{\tilde{\bf{V}}}_{k}^{'}
\label{un_V_k1},
\end{equation}
where ${\tilde{\bf{V}}}_{k}^{'}=\Big{(}{\tilde{\bf{\Psi}}}_{k} +
\frac{\sum_{i=1}^{K}\text{Tr}( {\tilde{\bf{W}}}_{i}
{\tilde{\bf{U}}}_{i}{\tilde{\bf{U}}}_{i}^{H})}{P_{T}}{\bf{I}}_{M}
+ \sum_{i=1}^{K} \sigma_{\Delta}^2
\text{Tr}({\bf{\Lambda}}_{{\tilde{\bf{U}}}_i}){\bf{I}}_M
\Big{)}^{-1}
{\tilde{\bf{H}}}_{kk}^{H}{\tilde{\bf{U}}}_{k}^{H}{\tilde{\bf{W}}}_{k}$,
 ${\tilde\beta} =\sqrt{\frac{P_T}{
\sum_k\text{Tr}({\tilde{\bf{V}}}_{k}^{'}
{{\tilde{\bf{V}}}_{k}^{'H}} )}}$,
${\tilde{\bf{\Psi}}}_{k}=\sum_{i=
1}^{K}{\tilde{\bf{H}}}_{ik}^{H}{\tilde{\bf{U}}}_{i}^{H}{\tilde{\bf{W}}}_{i}{\tilde{\bf{U}}}_{i}{\tilde{\bf{H}}}_{ik}$,
 and
$\text{SVD}({\tilde{\bf{U}}}_{i}^{H}{\tilde{\bf{W}}}_{i}
{\tilde{\bf{U}}}_{i})={\bf{Q}}_i^{'}{\bf{\Lambda}}_{{\tilde{\bf{U}}}_i}{\bf{Q}}_i^{'H}$.

\subsubsection{Robust transceiver design with the individual power constraint}
 The optimization problem to derive the modified precoder is
\begin{align}
\min \sum_{k=1}^{K} \mathbb{E} \lbrack \text{Tr}\lbrace
{\tilde{\bf{W}}}_{k} ({\bf{s}}_k -{\tilde{{\bf{s}}}}_k
)({\bf{s}}_k -{\tilde{{\bf{s}}}}_k )^H\rbrace \rbrack \quad
\text{subject to  }
\text{Tr}({\tilde{\bf{V}}}_k{\tilde{\bf{V}}}_k^H)= P_k
\quad\forall k \label{Mod_WMMSE_Vk_indP}.
\end{align}
With the matrix derivative formulas, the modified-WMMSE
transmit precoder at $\textsf{S}_k$ with the individual power
constraint is written as
\begin{align}
{\tilde{\bf{{V}}}}_{k} =& \Big{(}{\tilde{\bf{\Psi}}}_{k} +
{\tilde{\lambda}}_k{\bf{I}}_{M} + \sum_{i=1}^{K} \sigma_{\Delta}^2
\text{Tr}({\bf{\Lambda}}_{{\tilde{\bf{U}}}_i}){\bf{I}}_M
\Big{)}^{-1}
{\tilde{\bf{H}}}_{kk}^{H}{\tilde{\bf{U}}}_{k}^{H}{\tilde{\bf{W}}}_{k}\label{un_V_k2_indP}
\end{align}
where the power control parameter ${\tilde{\lambda}}_k$ is also
found by numerical 1-D search.

Note that, for the above derivations, we have assumed that the
value of the channel error variance $\sigma_{\Delta}^2$ is perfectly
known. In the practical systems, the channel error variance can be
estimated through an appropriate statistical approach \cite{Davidson_2007}.
Below, we also present numerical performance results corresponding
to the cases where the error variance is not perfectly known.

\section{Discussion: Computational complexity, channel state information}
In this section, we analyze computational complexity and required
feedback resources. For comparison, we also analyze those of the
gradient descent method of \cite{LeeInKyu_KuserIC}.

\subsection{Computational complexity}
We consider the number of complex multiplications as a complexity
measure. As summarized in the Table \ref{tab:complexity}, the
number of complex multiplications is proportional to the number of
iterations. The proposed method with the sum-power constraint
which has a single iteration loop is computationally the most
efficient. Whereas both the proposed method with the individual-power constraint and the gradient descent method require double
iteration loops, i.e., the outer loop for updating the sum rate
and the inner loop for adjusting the Lagrange multiplier (in the
case of the proposed method) or for updating the step size (in the
case of the gradient-based method). Calculating the gradient and
adjusting the step size require more computational resources.
According to simulation, when SNR = $10$ dB which is in the mid
SNR regime, $K = 4$, $M = N = 5$, and $d = 2$, the minimum average
numbers of iteration for the convergence of sum rate, updating the
step size of gradient method and 1-D search with bisection method
are $10$, $10$ and $10$, respectively. In accordance with these,
$I_1=10$, $I_2=10$ and $I_3=10$ are chosen. The symbols $c_N^1$,
$c_{NM}^2$ and $c_N^3$ denote the computational complexity of a
matrix inversion of $N\times N$ matrix, a singular value
decomposition of $N\times M$ matrix, and a Cholesky factorization
of $N\times N$ matrix, respectively. The corresponding values for
those variables are $\frac{2}{3}N^3$, $7NM^2+4M^3$, and
$\frac{1}{3}N^3$, respectively \cite{Golub_book}. Fig.
\ref{FIG:complexity} shows comparison when $M=N=5$ and $d=2$
\footnote{To see the effect of the number of $K$, we fixed
$M=N=5$, even though the degree of freedom (DoF) is not achievable when $K\ge 5$}. As
expected, for the same WSR values the proposed method with the sum-power constraint has the least complexity while the gradient
descent algorithm is the most computationally complex.

\subsection{The amount of required feedback information}

To find the optimized transmit precoders, each transmit node
requires feedback information. As illustrated in Table II,
feedback information is composed of CSI and coefficients for
filter updating. For a given transmission slot, CSI feedback is
required once, but the filter coefficients are updated several
times due to the iterative optimization algorithm. Although the
proposed method requires a larger amount of feedback information for the
iteratively updated coefficients such as MSE weights $\lbrace
{\bf{W}}_k\rbrace$ and receive filter coefficients $\lbrace
{\bf{U}}_k\rbrace$ than the gradient descent method does, the amount of CSI
feedback for the proposed method is smaller than for the
gradient descent method. This is because, unlike the global CSI
requirement of the gradient-based method, the proposed methods need
only local CSI. From Table II, we observe that as the network size
grows (i.e., $K$ increases) the required feedback resources for local
CSI and coefficient updating increase linearly, but those for
global CSI increases quadratically. Fig. \ref{FIG:FB_comp} clearly
shows that with $I_1=10$ the proposed
methods are advantageous in terms of required feedback resources,
especially for larger $K$. Note that, for the transmit power
adjustment, the sum-power-constrained method additionally requires
iterative update of the scalar parameter $\text{Tr}\lbrace\sum_{i\ne
k}{\bf{V}}_i^{'}{\bf{V}}_i^{'H}\rbrace$, but the size of this parameter is negligible
compared to other matrix parameters.

\section{Numerical Results}

In this section, we provide the numerical results related to the
WSR performances. The SNR for the
sum-power-constrained network,
$\text{SNR}=\frac{P_T\sigma_h^2}{K\sigma_n^2}$, and that for the
individual power constrained network,
$\text{SNR}_k=\frac{P_k\sigma_h^2}{\sigma_n^2},\forall k$, are
derived assuming $P_T = K$, $P_k = 1 \forall k$ and
$\sigma_n^2=1$, i.e., $\text{SNR}=\text{SNR}_k=\sigma_h^2$. The
results are averaged over 1000 independent trials. Fig.
\ref{FIG:WSR} shows the average WSR performance of
the proposed methods for $M = N = 5$ (when $K=4$), $M = N = 6$
(when $K=5$), and $d = 2$. For fairness, all schemes are
initialized with the right singular matrices of the intended
channels. For the sum-power constraint, we set the weights to be $\mu_1=2$ and $\mu_k
=0.25$ $(k\ne 1)$, which were chosen rather arbitrarily except that
$\mu_1$ is made considerably larger than $\mu_k$ to bring out the performance advantage of the
sum-power constraint. The performance of the sum power constraint
method should be better than that of the individual power constraint
method because the former, which is less stringent, is
able to allocate more power to the higher weighted transmitter to
maximize the WSR. When the weights are equal,
$\mu_k=1$ $\forall k$, the performance of both proposed schemes
and that of the conventional gradient descent method are nearly
identical. Note that, as explained in section IV, the proposed
methods achieve these performances with less computational
complexity and a smaller amount of feedback resources than the
gradient descent method. Compared to the performance of the MMSE
transceiver without the MSE weights \cite{Shen:2010,Heath_IA:2011}
(curves labelled "Simple MMSE"), the advantage of designed MSE
weights is clearly shown as SNR grows. Fig. \ref{FIG:WSR_CHE}
demonstrates the effectiveness of the robust design with either
transmit power constraint in presence of channel uncertainty when
$K=4$ and $\sigma_{\Delta}^2=0.1\sigma_{h}^2$. As SNR grows, the
amount of leakage interference due to CSI imperfection also
increases. This is why the performance is saturated in the high SNR
regime in Fig. \ref{FIG:WSR_CHE}. To reflect a potential error in
estimating $\sigma_{\Delta}^2$, we model the channel error
variance as $\sigma_{\Delta}^2+\sigma_{\epsilon}^2$, where
$\sigma_{\Delta}^2$ is the actual channel error variance and
$\sigma_{\epsilon}^2$ indicates over-estimation. As shown in Fig.
\ref{FIG:WSR_CHE}, at SNR $= 15$ dB at most 3 \% sum rate losses
are shown when $\sigma_{\epsilon}^2=0.1\sigma_{\Delta}^2$.
Although not shown, same results were observed for
under-estimating the channel estimation error variance.

\section{Conclusion}
In this letter, we have studied a linear transceiver design method
for the K-user MIMO interference channel. To maximize the weighted sum
rate with less computational complexity and a smaller amount of
feedback resources, the proposed transceivers are designed in the
weighted MMSE sense with suitably chosen MSE weights. Also, the
proposed transceiver design considers both the sum-power-usage
constraint and the individual-power
constraint. Through numerical simulation, we have demonstrated
that the weighed-sum-rate performances of the proposed schemes approach
that of the existing gradient descent method. The proposed methods
have clear advantage in terms of processing requirements as well
as feedback resources over the gradient-based technique. Also,
modified versions of proposed schemes have been provided for
compensating channel mismatch.


\newpage

%
%
%

 \ifCLASSOPTIONcaptionsoff
  \newpage
\fi

\bibliographystyle{IEEEtran}
\bibliography{IEEEabrv,References}


\newpage
\begin{figure}[t]
\begin{center}
\leavevmode\epsfxsize=0.7\textwidth
\epsffile{./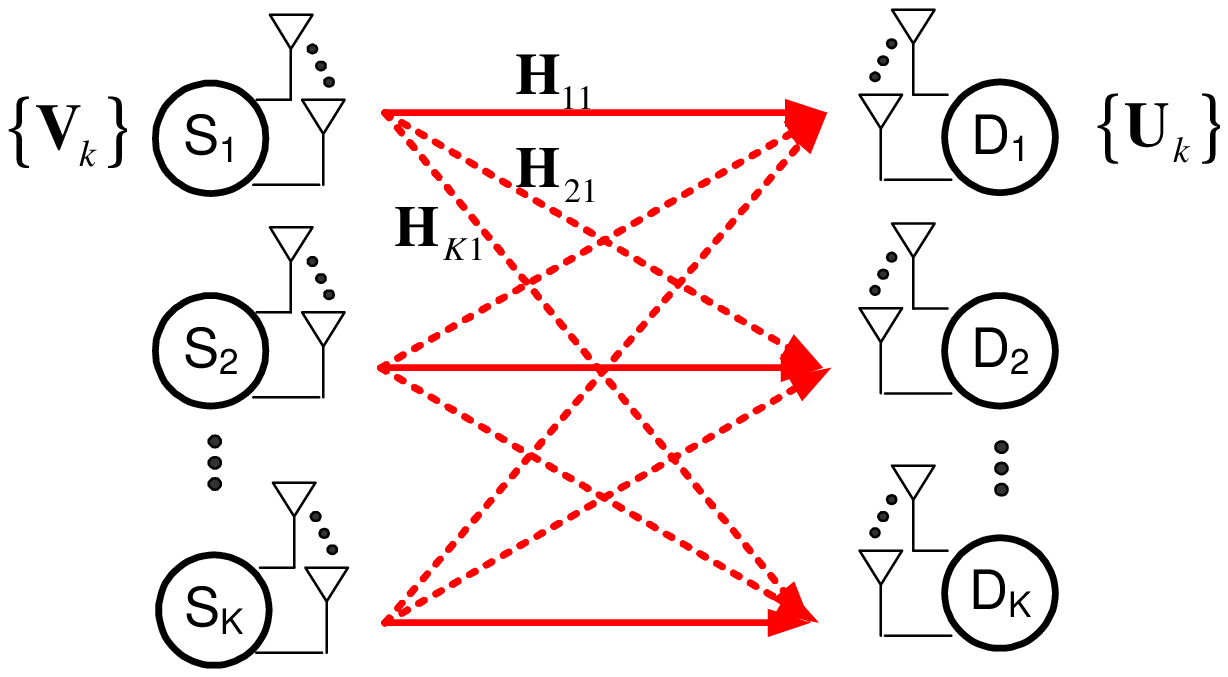} \caption{{\small{K-user
 MIMO interference channel}}}
\label{FIG:system_model}
\end{center}
\end{figure}

\newpage
\begin{figure}[t]
\begin{center}
\leavevmode\epsfxsize=0.7\textwidth
\epsffile{./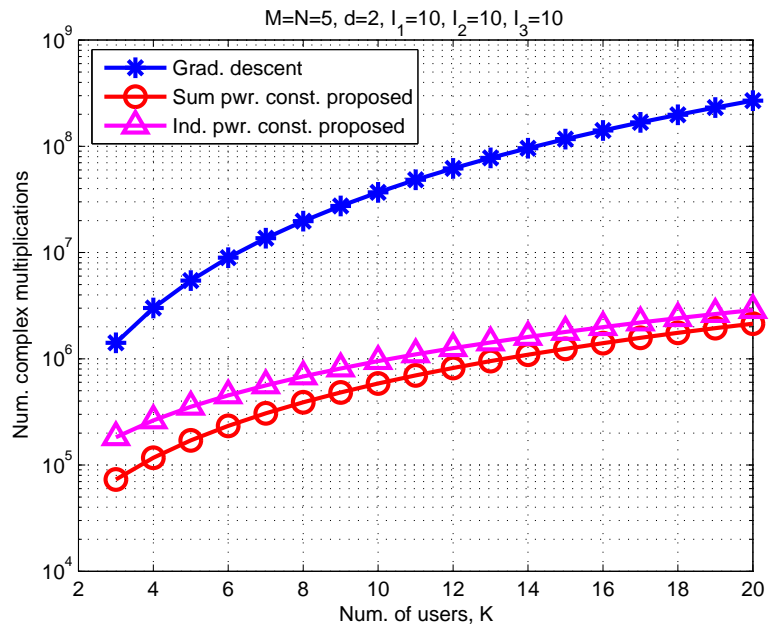} \caption{{\small{Complexity
comparison between the gradient-descent-based method and the
proposed methods}}}
\label{FIG:complexity}
\end{center}
\end{figure}

\newpage
\begin{figure}[t]
\begin{center}
\leavevmode\epsfxsize=0.7\textwidth
\epsffile{./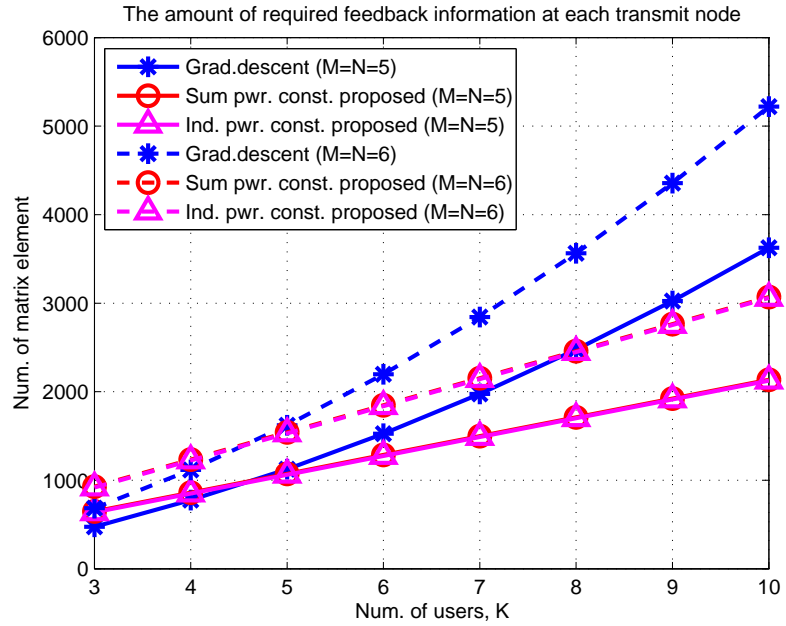} \caption{{\small{The amount of
feedback information at each source node to design precoder}}}
\label{FIG:FB_comp}
\end{center}
\end{figure}

\newpage

\begin{figure}
\centering \mbox{\subfigure[K=4, M=N=5,
d=2]{\leavevmode\epsfxsize=0.5\textwidth
\epsffile{./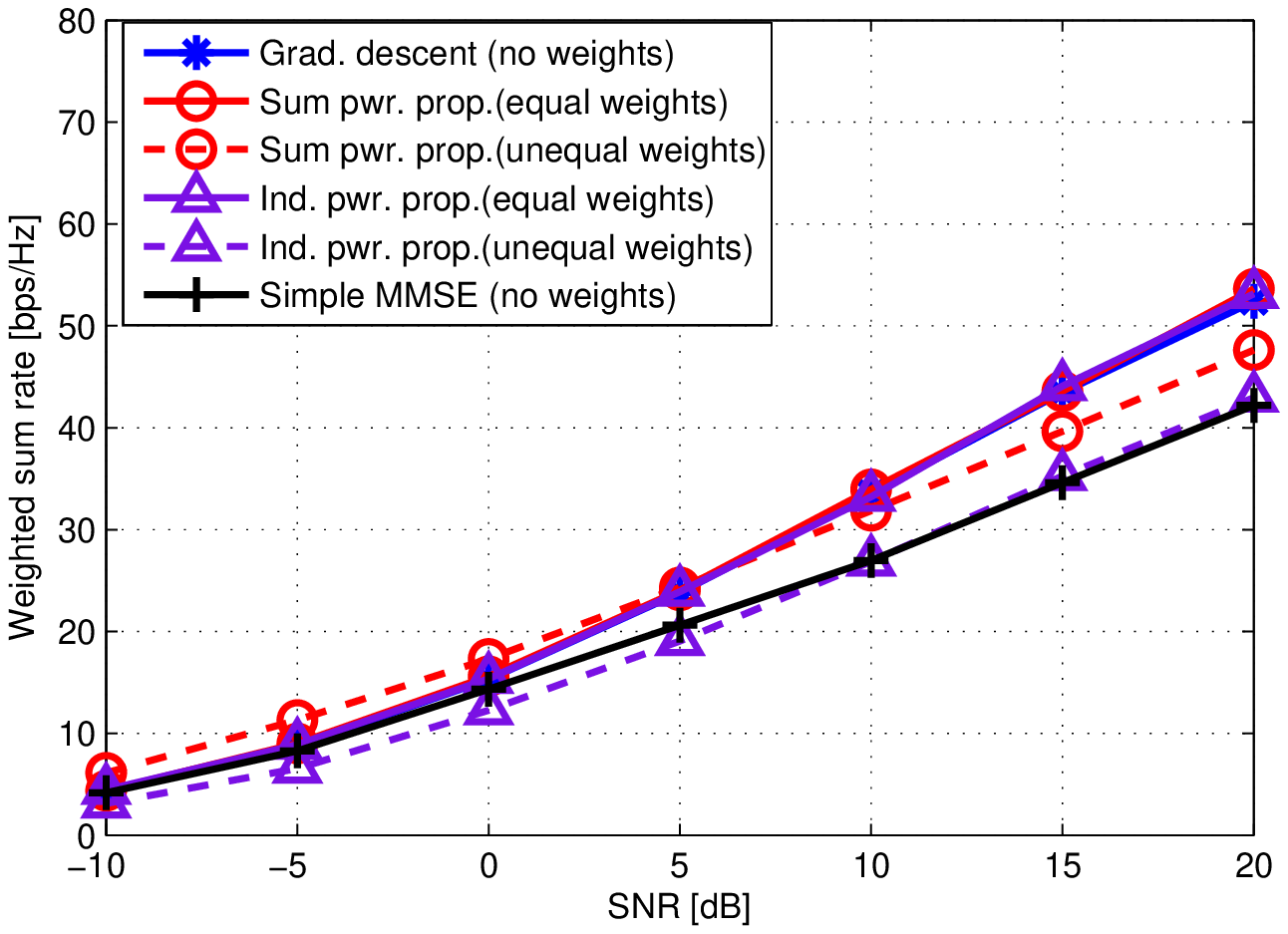}}\quad \subfigure[K=5, M=N=6,
d=2]{\leavevmode\epsfxsize=0.5\textwidth
\epsffile{./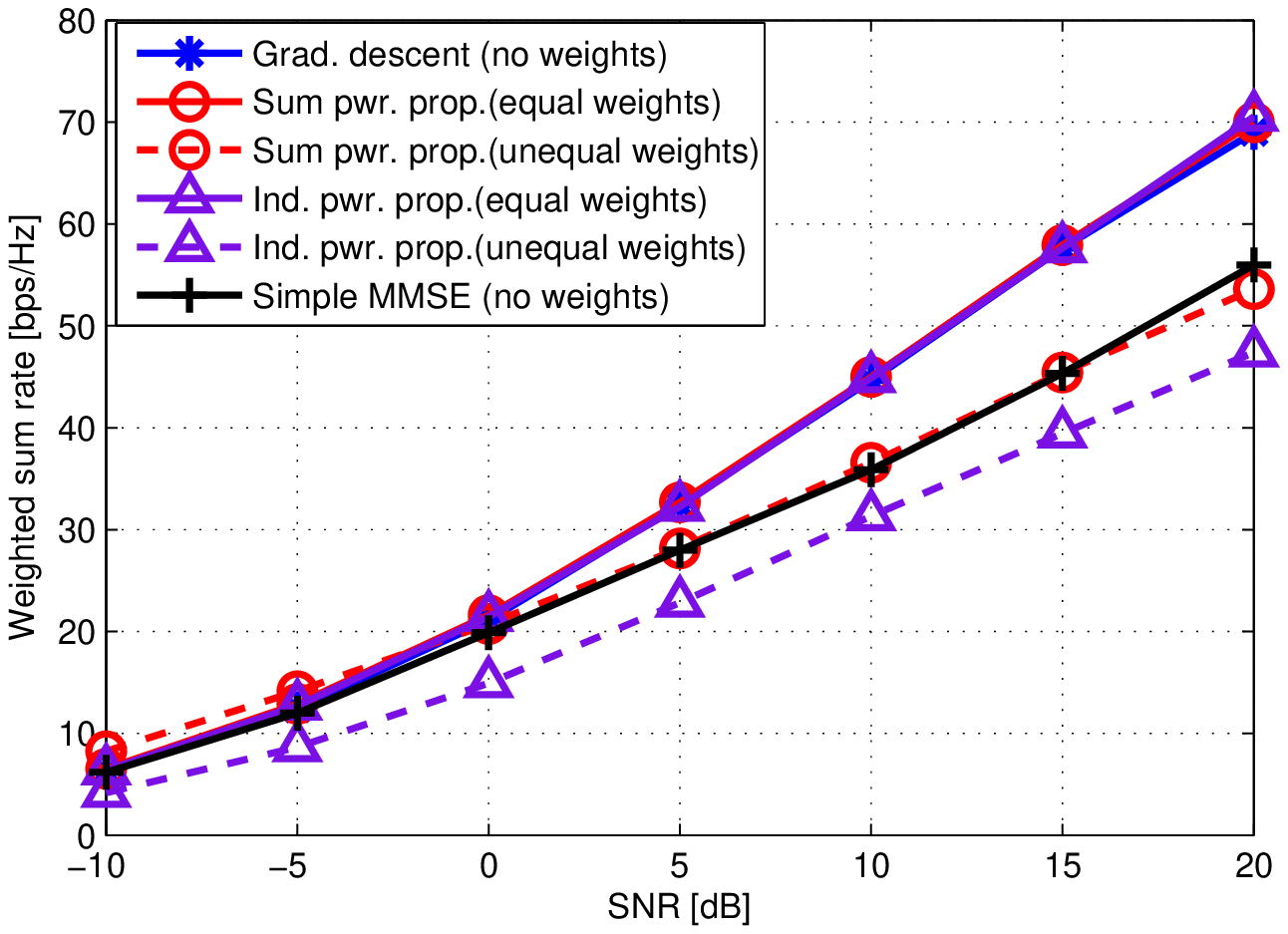}}} \caption{The weighted sum rate
performance (equal weight $\mu_k=1, \forall k$ and unequal weights
$\mu_1=2,\mu_{k'}=0.25,{k'}\ne 1$)} \label{FIG:WSR}
\end{figure}

\newpage

\begin{figure}
\centering \mbox{\subfigure[Equal weights ($\mu_k=1,\forall
k$)]{\leavevmode\epsfxsize=0.5\textwidth
\epsffile{./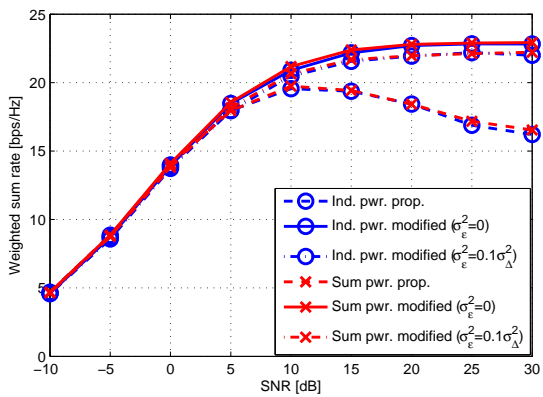}}\quad \subfigure[Unequal
weights ($\mu_1=2,\mu_k=0.25, k\ne
1$)]{\leavevmode\epsfxsize=0.5\textwidth
\epsffile{./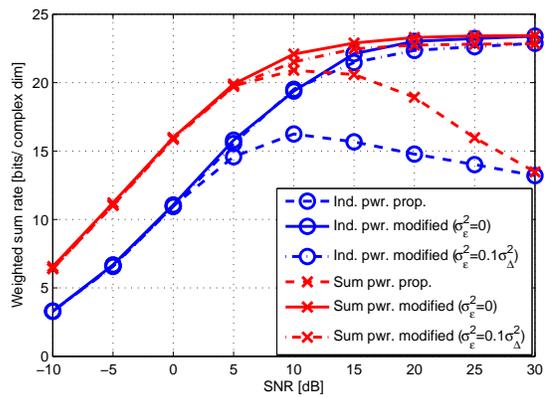}}} \caption{The weighted
sum rate performances with imperfect channel state information
($\sigma_{\Delta}^2=0.1\times \sigma_h^2, K=4)$}
\label{FIG:WSR_CHE}
\end{figure}

\newpage
\begin{table}[h!]
\centering \subtable[Description of each stage for gradient
descent method]{ \centering
\begin{tabular}{|c|c|c|}
\hline \multicolumn{2}{|c|}{STAGE} & \multicolumn{1}{c|}{Index} \\
\hline
\multicolumn{2}{|c|}{Initialization} & \multicolumn{1}{c|}{a.1} \\
\hline
 & \multicolumn{1}{|c|}{Calculating gradient}&  a.2 \\ \hhline{|~|-|-|}
Outer loop & \multicolumn{1}{|c|}{Inner loop: calculating step size}& a.3 \\
\hhline{|~|-|-|}
 & \multicolumn{1}{|c|}{Calculating sum rate}&  a.4 \\ \hline
\multicolumn{2}{|c|}{Calculating optimal precoders and decoders} & \multicolumn{1}{c|}{a.5 } \\
\hline
\end{tabular}
\label{tab:complexity:a}}

\subtable[Description of each stage for proposed methods]{
\centering
\begin{tabular}{|c|c|c|} \hline
\multicolumn{2}{|c|}{STAGE} & \multicolumn{1}{c|}{Index} \\
\hline
\multicolumn{2}{|c|}{Initialization} & \multicolumn{1}{c|}{b.1} \\
\hline
 & \multicolumn{1}{|c|}{Calculating the variance of noise and
 interference}& b.2
 \\ \hhline{|~|-|-|}
 & \multicolumn{1}{|c|}{Calculating the receive filter}& b.3
 \\ \hhline{|~|-|-|} Loop
 & \multicolumn{1}{|c|}{Calculating the error covariance matrix}& b.4
 \\ \hhline{|~|-|-|}
 & \multicolumn{1}{|c|}{Calculating the MSE weights}& b.5
 \\ \hhline{|~|-|-|}
 & \multicolumn{1}{|c|}{Calculating the transmit filter}&
 b.6-1 (for sum power constraint) \\
 & \multicolumn{1}{|c|}{(1-D search is needed for individual power constraint)}& b.6-2 (for
individual power constraint)
 \\ \hhline{|~|-|-|}
 & \multicolumn{1}{|c|}{Calculating sum rate}& b.7 \\
\hline
\end{tabular}
\label{tab:complexity:b}}

\subtable[Number of complex multiplication at each
stage]{\centering
\begin{tabular}{|c|c|} \hline
Index& Number of complex multiplication \\
\hline
 a.1 & $ K(M^2 d +1) +K(K-1)(1+2MNd+N^2d) +
K(2+2MNd+N^2d+N^3+c_{N}^{1})$ \\ \hline
 a.2 & $ I_1\Big{\lbrace} K(2K-1)(1+2MNd+N^2d) $ \\
     & $ +K(2K-1)(9+2c_N^1+2MN^2+2M^2N+2M^2d+Md^2)
 \Big{\rbrace}$ \\ \hline
 a.3 & $I_1\Big{\lbrace} KI_2(I_2+1)/2 +  KI_2\lbrace2K(K-1)(1+2MNd+N^2d) $ \\
     & $+2K(2+2MNd+N^2d+N^3+c_N^1)+K(M^2d+1)+2+Md^2  \rbrace
     \Big{\rbrace}$ \\ \hline
 a.4 & $I_1\lbrace K(M^2 d +1) +K(K-1)(1+2MNd+N^2d) +
 K(2+2MNd+N^2d+N^3+c_{N}^{1})\rbrace$\\ \hline
 a.5 & $ K(1+2Md+2M^2d+c_{Md}^{2}) + K(K-1)(2MNd+N^2d)$\\
      & $ +K(2MNd+2N^2d+4Nd^2+Md^2+d^3+c_N^1+c_d^3+c_d^2+c_{dd}^2)$\\
      \hline\hline
 b.1 & $K(M^2 d +1) +K(K-1)(1+2MNd+N^2d) +
K(2+2MNd+N^2d+N^3+c_{N}^{1})$ \\ \hline
 b.2 & $I_1 K(K-1)(1+2MNd+N^2d)$ \\ \hline
 b.3 & $I_1 K(3MNd+2N^2d+c_N^1)$  \\ \hline
 b.4 & $I_1 K(2MNd+N^2d+Nd^2+c_N^1+c_d^1) $  \\ \hline
 b.5 & $I_1 K c_d^1$  \\ \hline
 b.6-1 & $I_1 \Big{\lbrace} K(K-1)(2NMd+Md^2+M^2d) +K(Nd^2+d^3) $ \\ & $+K(3MNd+2Md^2+M^2d+1+c_M^1) +K(M^2d+Md) \Big{\rbrace}$  \\ \hline
 b.6-2 & $I_1 \Big{\lbrace} K(K-1)(2NMd+Md^2+M^2d)  +I_3 K M^2d $
\\ & $+(I_3+1)K(3MNd+2Md^2+M^2d+1+c_M^1)\Big{\rbrace}$  \\
\hline
 b.7 & $I_1\lbrace K(M^2 d +1) +K(K-1)(1+2MNd+N^2d) + K(2+2MNd+N^2d+N^3+c_{N}^{1})\rbrace $  \\ \hline
\end{tabular}
\label{tab:complexity:c}} \caption{Computational complexity
comparison} \label{tab:complexity}
\end{table}


\newpage
\begin{table}[h!]
\centering
\begin{tabular}{|c|c|c|c|c|} \hline
&\multicolumn{2}{c}{Grad. descent method}
&\multicolumn{2}{|c|}{Prop. method} \\ \cline{2-5}
 & Global CSI & Updating coefficients & Local CSI & Updating coefficients\\
\hline \multirow{2}*[-.3ex]{Feedback information} & $\lbrace
{\bf{H}}_{ij}\rbrace$ & $\lbrace {\bf{V}}_{i}\rbrace$, $(i\ne k)$
& $\lbrace {\bf{H}}_{ik}\rbrace$ & $\lbrace {\bf{U}}_{i}\rbrace$,
$\lbrace {\bf{W}}_{i}\rbrace$ (Ind. pwr.) \\ \cline{5-5} & & & &
$\lbrace {\bf{U}}_{i}\rbrace$, , $\lbrace {\bf{W}}_{i}\rbrace$,
$\sum_{i\ne k}\text{Tr}\lbrace{\bf{V}}_i{\bf{V}}_i^{'H}\rbrace$
(Sum pwr.)
\\ \hline
\multirow{2}*[-.3ex]{Matrix size} & $MNK^2$ & $Md(K-1)$ & $MNK$ &
$(Md+d^2)K$ (Ind. pwr.) \\ \cline{5-5} & & & & $(Md+d^2)K+1$ (Sum.
pwr.)
\\ \hline \multirow{2}*[-.3ex] {Feedback resource amount}  &\multicolumn{2}{c|} {$MNK^2$ + $Md(K-1) I_{1}$}   & \multicolumn{2}{c|} {$MNK$ + $(Md+d^2)K I_{1}$ (Ind. pwr.)}
\\ \cline{4-5}
&\multicolumn{2}{c|}{}& \multicolumn{2}{c|}{$MNK$ + $((Md+d^2)K+1) I_{1}$ (Sum. pwr.)} \\
\hline
\end{tabular}
\label{tab:FBamount}\caption{Summary of required feedback
information at the $k$-th transmit node, $\textsf{S}_k$,
$i,j=1\sim K$}
\end{table}

\end{document}